\newcommand{\ap}{\mathrm{ap}}
\newcommand{\eventually}{\operatorname{\mathbf{F}}}
\newcommand{\false}{\mathrm{false}}
\newcommand{\globally}{\operatorname{\mathbf{G}}}
\newcommand{\until}{\mathrel{\mathbf{U}}}
\newcommand{\lnext}{\operatorname{\mathbf{X}}}
\newcommand{\rarrow}{\boldsymbol{\rightarrow}}
\newcommand{\theory}{\mathcal{T}}
\newcommand{\tdomain}{\mathbb{T}}
\newcommand{\bool}{\mathcal{B}}
\newcommand{\updf}{\mathbf{u}}
\newcommand{\updfs}{\mathbf{U}}
\newcommand{\rvar}{\mathbf{r}}
\newcommand{\rvars}{\mathbf{R}}
\newcommand{\ivar}{\mathbf{i}}
\newcommand{\ivars}{\mathbf{I}}
\newcommand{\quantelim}[1]{\texttt{quantelim}(#1)}
\newcommand{\isconsistent}[1]{\texttt{isconsistent}(#1)}
\newcommand{\sat}[1]{\texttt{sat}(#1)}
\newcommand{\true}{\mathrm{true}}
\newcommand{\bexpr}{E_\theory^\mathbb{B}(R \cup I)}
\newcommand{\texpr}{E_\theory^\tdomain(R \cup I)}
\newcommand{\nt}[1]{\langle\text{#1}\rangle}
\newcommand{\consistent}{\textsc{consistent}}
\newcommand{\inconsistent}{\textsc{inconsistent}}
\newcommand{\unrealizable}{\textsc{unrealizable}}
\newcommand{\vvar}{\mathbf{v}}
\begin{document}
\title{Reactive Synthesis Modulo Theories}
\subtitle{Using Abstraction Refinement}
%
%
\author{Benedikt Maderbacher \and Roderick Bloem}
\authorrunning{}
%
\institute{Graz University of Technology, Austria}
\maketitle              
\begin{abstract}
Reactive synthesis builds a system from a specification given as a temporal logic formula.
Traditionally, reactive synthesis is defined for systems with Boolean input and output variables.
Recently, new theories and techniques have been proposed to extend reactive synthesis to data domains, which are required for more sophisticated programs.
In particular, Temporal stream logic (TSL) \cite{finkbeiner2019a} extends LTL with state variables, updates, and uninterpreted functions and was created for use in synthesis.
We present a synthesis procedure for TSL(T), an extension of TSL with theories.
Synthesis is performed using a counter-example guided synthesis loop and an LTL synthesis procedure.
Our method translates TSL(T) specifications to LTL and extracts a system if synthesis is successful.
Otherwise, it analyzes the counterstrategy for inconsistencies with the theory. If the counterstrategy is theory-consistent, it proves that the specification in unrealizable. Otherwise, we add temporal assumptions and Boolean predicates to the TSL(T) specification and start the next iteration of the the loop.
We show that the synthesis problem for TSL(T) is undecidable. Nevertheless our method can successfully synthesize or show unrealizability of several non-Boolean examples. 
\end{abstract}

\section{Introduction}

Reactive synthesis \cite{DBLP:reference/mc/BloemCJ18} is the problem of automatically constructing a system from a specification.
The user provides a specification in temporal logic and the synthesis procedure constructs a system that satisfies it.
Traditionally this only works for systems with Boolean input and output variables.
However, real world system often use more sophisticated data like integers, reals, or structured data.
For finite domains it is possible to use bit-blasting to obtain an equivalent Boolean specification.
These will be hard for a human to read and the large number of variables make them  very challenging for a synthesis tool to solve.

In recent years multiple theories have been proposed to perform reactive synthesis with non Boolean inputs and outputs.
There have been decidability results for synthesis using register automata \cite{ehlers2014,khalimov2018,exibard2021} and variable automata \cite{faran2020}.

Our work builds on temporal stream logic (TSL). TSL, proposed by Finkbeiner et al.\ \cite{finkbeiner2019a}, uses a logic based on linear temporal logic (LTL) with state variables, uninterpreted functions and predicates, and update expressions. TSL allows for an elegant and efficient synthesis method that separates control from data.
However, the ability to specify how data is handled is limited because functions and predicates remain uninterpreted.
Finkbeiner et al. \cite{finkbeiner2021} describe an extension to TSL modulo theories, but consider only satisfiability and not synthesis.

In this paper we propose a synthesis algorithm for temporal stream logic modulo theories that can be applied to arbitrary decidable theories in which quantifier elimination is possible.
Let us consider a concrete example using the theory of linear integer arithmetic (LIA).

\begin{example}
We want to build a system with one integer state variable $x$ and one integer input $i$.
The objective is to keep the value of the state variable between $0$ and $100$.
At any time step the system can select one of two updates: increase or decrease $x$ by $i$, where $i$ is chosen by the environment in the interval $0 \leq i < 5$.
We assume that the initial state is any value inside the boundaries.
These requirements can be written as the TSL formula
\begin{multline*}
    \phi \triangleq (0 \leq x \land x < 100 \land \globally (0 \leq i \land i < 5)) \rarrow \\
\globally (0 \leq x \land x < 100 \land ([x \leftarrow x-i] \lor [x \leftarrow x+i])),    
\end{multline*}
where the propositions $[x \leftarrow x-i]$ and $[x \leftarrow x+i]$ describe updates to $x$.

A human programmer might write the following program satisfying the specification:
\begin{lstlisting}
while(true)
    i := receive()
    if (x-i>=0)
         x := x - i
    else
         x := x + i
\end{lstlisting}
Note how this program uses a condition that doesn't appear in the original specification.
In fact, it is impossible to write a correct system using only the predicates from the specification.
\end{example}

Inspired by this example we want our synthesis algorithm to function with expressions from theories and to find new predicates when necessary. Our algorithm is similar to the one in the original TSL synthesis paper \cite{finkbeiner2019a}.
The TSL specification is encoded into an LTL formula that contains a Boolean variable for each theory predicate in the TSL formula. These variables are seen as inputs, which means that the environment determines their truth values. As a result, realizability of the LTL formula implies realizability of the TSL formula, but not vice versa, because the environment can choose values for the variables that are not consistent with the theory. 
The LTL formula is then given to a propositional LTL synthesis tool \cite{meyer2018,schewe2007}. If Boolean synthesis is successful we obtain a Boolean system that can be concretized into a system that operates on the original value domain.
If synthesis of the LTL formula is not successful, we get a Boolean counter strategy that we analyze for inconsistencies with respect to the theory.
An theory-consistent counter strategy means that the TSL specification is unrealizible.
If an inconsistency is found the counter strategy is spurious and new assumptions and possibly new predicates are generated and integrated into the TSL specification.

The procedure can be likened to a CEGAR loop \cite{clarke2000},  or to the DPLL(T) in which LTL synthesis plays the role of the propositional SAT solver and the consistency check is performed by the theory solver. The main difference is that in our case inconsistencies can span multiple time steps.

\Cref{fig:overview} shows an overview of our approach.
We will show that the synthesis problem for TSL modulo theories is undecidable and the process is thus not guaranteed to terminate.

\begin{figure}
    \centering
    \includegraphics[width=0.9\textwidth]{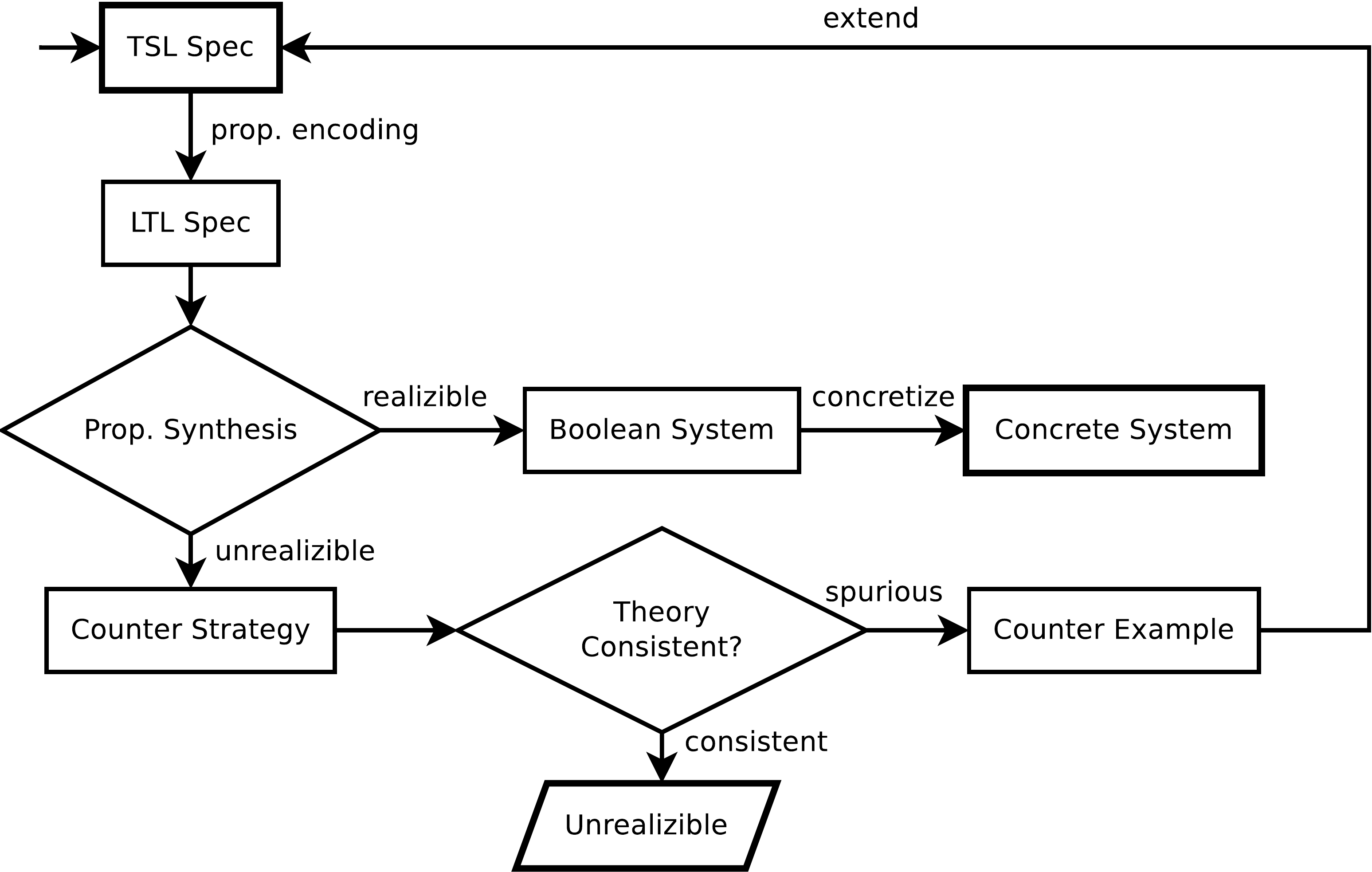}
    \caption{Overview of the synthesis procedure.}
    \label{fig:overview}
\end{figure}

The central part of our algorithm and main contribution of this paper is the theory consistency analysis of a counter strategy.
It uses an theory solver --- e.g., an SMT solver --- to locally analyze the states of transitions of the counter strategy to detect inconsistencies with respect to the theory. The assumptions it generates will contain new predicates where necessary.

The remaining paper is structured as follows:
\Cref{sec:def} contains required definitions and formalizes the synthesis problem for TSL modulo theory.
We describe the Boolean abstraction and the theory consistency analysis in detail in \cref{sec:abstr}.
The main synthesis procedure is described in \cref{sec:synthesis}.
An experimental evaluation was performed for multiple examples using the theories of linear integer arithmetic and linear real arithmetic (\cref{sec:experiments}). We conclude our work in \cref{sec:conclusion}.

\section{Preliminaries}
\label{sec:def}

We use Temporal Stream Logic (TSL) \cite{finkbeiner2019a}, with the addition of decidable theories.


\subsection{Theories and Updates}
 A theory $\mathcal{T}$ consists of a signature (symbols for constants, functions, and predicates) and a domain of values $\tdomain$. All symbols are assumed to have a fixed interpretation in the domain $\tdomain$. In the following we will use $E_\theory(V)$ to denote the set of expressions in $\theory$ with free variables that are a subset of $V$. The set $E_\theory(V)$ is partitioned into a set $E^{\tdomain}_\theory(V)$ of terms (denoting values in $\tdomain$) and a set $E^{\mathbb{B}}_\theory(V)$ of formulas (denoting truth values). 
 We assume that the theories used have decidable procedures for satisfiability checking and quantifier elimination. We assume that we are given a procedure \sat that returns true iff a formula $\phi$ is satisfiable and a function \quantelim that takes a formula and returns a theory-equivalent formula that does not contain quantifiers. 
 
\begin{example}
The signature of the theory of linear integer arithmetic (LIA) can be defined in Backus-Naur form:
\begin{align*}
    \nt{const} &:= 0 \mid 1 \mid 2 \mid \ldots\\
    \nt{var} &:= x \mid y \mid z \mid \ldots\\
    \nt{term} &:= \nt{const} \mid \nt{var} \mid - \nt{int} \mid \nt{int} + \nt{int} \mid \nt{const} * \nt{int} \\
    \nt{formula} &:= \nt{int}= \nt{int} \mid \nt{int} \leq \nt{int} \mid \forall  \nt{var}. \nt{formula}
\end{align*}

\end{example}

\paragraph{Updates}
In the following, we will use sets of state variables $R$ and input $I$ variables. The new values of the state variables are determined using update functions. An update function $\updf$ defines an update $\updf(r) \in \texpr$ for each $r \in R$. The set of all update functions is denoted by $\updfs$.


We introduce the notations $\rvars \triangleq R \to \tdomain$ and $\ivars \triangleq I \to \tdomain$ for valuations of variables. We write $R/\rvar$ ($I/\ivar$) to denote the replacement of all variables in $R$ ($I$, resp.) by their corresponding values in $\rvar$ ($\ivar$, resp.). With slight abuse of notation, we identify $e[R/\rvar, I/\ivar]$ with the corresponding value in the domain.

To apply an update function $\updf \in \updfs$ to valuations $\rvar \in \rvars$ and $\ivar \in \ivars$  we write $\updf[\rvar,\ivar]$ which is defined as $\updf[\rvar,\ivar](r) = \updf(r)[R/\rvars, I/\ivars]$ for each $r$.

\subsection{Temporal Stream Logic Modulo Theories}
TSL(T) is based on linear temporal logic, but instead of Boolean variables it uses updates and Boolean theory expressions.
The grammar for TSL(T) formulas is
\begin{align*}
    \nt{ap} &:= \texpr\\
    \nt{theory} &:= \bexpr\\
    \nt{bconst} &:= \true \mid \false\\
    \nt{upd} &:= [ \nt{var} \leftarrow \nt{theory}]\\
    \nt{tsl} &:= \nt{ap} \mid \nt{upd} \mid \nt{bconst} \mid \neg \nt{tsl} \mid  \nt{tsl} \land \nt{tsl} \mid \nt{tsl} \until \nt{tsl} \mid \lnext \nt{tsl}.
\end{align*}

The semantics of TSL(T) are defined with respect to a trace of inputs and state variable configurations $\rho \in (\ivars \times \rvars)^\omega$ as follows. We assume that $\rho = \rho_0, \rho_1, \dots$ and that $\rho_j = (\rvar_j, \ivar_j)$ and we define
\begin{align*}
\rho \models p                 &\text{ iff } \rho_0 \models p \text{ for $p \in \nt{ap}$}\\
\rho \models [r \leftarrow e] &\text{ iff } \rvar_1= e(R/\rvar_0,I/\ivar_0) \\
\rho \models \true & \\
\rho \not\models \false & \\
\rho \models \neg \phi  &\text{ iff } \rho \not\models \phi\\
\rho \models \phi \wedge \psi  &\text{ iff } \rho \models \phi \text{ and } \rho \models \psi\\
\rho \models \phi \until \psi  &\text{ iff } \exists j. \rho_j, \rho_{j+1},\dots \models \psi \text{ and } \forall i<j. \rho_i,\rho_{i+1},\dots \models \phi \\
\rho \models \lnext \phi  &\text{ iff } \rho_1, \rho_2,\dots  \models \phi\\
\end{align*}

The unary temporal operators eventually ($\eventually$) and globally ($\globally$) can be added using their usual definitions: $\eventually \varphi \equiv \true \until \varphi$ and $\globally \varphi \equiv \neg \eventually \neg \varphi$.

\subsection{Theory Mealy and Moore Machines}
Theory Mealy machines are state machines with inputs and  register variables that range over the theory domain. The updates to the state variables and the registers are restricted by a set of predicates and the selected update functions.

A \emph{Theory Mealy Machine} $M_{\theory} = (Q, q_0, P, \rvar_0, \delta, \mu)$ consists of a finite set of states $Q$, an initial state $q_0 \in Q$, a finite set of predicates $P \subseteq \bexpr$, an initial valuation $\rvar_0 \in \rvars$, a transition function $\delta \in (Q \times 2^P) \to Q$ and a update selection function $\mu \in (Q \times 2^P) \to \updfs$.

For a given valuation $\vvar = (\rvar, \ivar)$, let $P_{\vvar}\subseteq P$ be the subset of predicates that is true in $\vvar$: $P_{\vvar} = \{p \in P \mid \vvar \models p\}$.

A run $\sigma$ of a theory Mealy machine induced by a sequence of input valuation $\bar{\ivar} = \ivar_0, \ivar_1,\dots \in \ivars^\omega$ is an infinite sequence of states $Q$ and valuations $\rvars$ $(q_0,\rvar_0), (q_1,\rvar_1), \ldots$. Any two consecutive configurations $(q_i,\rvar_i)$ and $(q_{i+1}, \rvar_{i+1})$ must be related by $q_{i+1} = \delta(q_i, P_{(\rvar_i,\ivar_i)})$ and $\rvar_{i+1} = \updf_i[\rvar_i, \ivar_i]$ where $\updf_i = \mu(q_i, P_{(\rvar_i, \ivar_i)})$.

A Theory Mealy machine $M_\theory$ realizes a TSL(T) formula $\phi$ if for all inputs sequences $\bar{\ivar} \in \ivars^\omega$ the resulting trace $\rho \equiv (\bar{\ivar}, \bar{\rvar})$ satisfies $\phi$.

We also define Theory Moore machines, which read the updates produced by a Mealy machine and produce the inputs read by a Mealy machine. Intuitively, Mealy machines are used to show realizability of a TSL(T) specification, while Moore machines are used to show their unrealizability.
A \emph{Theory Moore Machine} $M_{\theory} = (Q, q_0, P, \rvar_0, \delta, \iota)$ consists of a finite set of states $Q$, an initial state $q_0 \in Q$, a finite set of predicates $P \subseteq \bexpr$, an initial valuation $\rvar_0 \in \rvars$, and a transition function $\delta \in (Q \times \updfs) \to Q$, and $\iota: Q \times \rvars \rightarrow \ivars$ is the output function.  

A run $\sigma$ of a theory Moore machine induced by an infinite sequence of update functions $\bar{\updf} \in \updfs^\omega$ is a sequence of states and valuations $(q_0,\rvar_0), (q_1,\rvar_1), \ldots$. Any two consecutive entries $(q_i,\rvar_i)$ and $(q_{i+1}, \rvar_{i+1})$ must be related by $q_{i+1} = \delta(q_i, \updf_i)$ and $\rvar_{i+1} = \updf_i[\rvar_i, \iota(q_i)]$.

\section{Boolean Abstraction}
\label{sec:abstr}

\subsection{Propositional Encoding of TSL(T)}
In this section, we describe the propositional encoding of TSL(T), which closely follows that of Finkbeiner et al.\ \cite{finkbeiner2019a}.


A TSL(T) formula $\phi$ is encoded to an LTL formula $\phi_\bool$. Formula $\phi_\bool$ is obtained by replacing each update $u$  in $\phi$ by a Boolean output variables $p_u$ and  each atomic proposition $\ap$ by a Boolean input variable $p_{\ap}$.
Additionally, the formula  ensures that for each variable exactly one update is active at any point in time.
This results in:
$$
\phi_\bool \triangleq \globally(\bigwedge_r \bigvee_i( p_{[r \leftarrow e_i]} \land \bigwedge_{j\neq i} \neg p_{[r \leftarrow e_j]})) \land \phi[ap / p_{ap}, \ldots, u / p_u, \ldots].
$$


\begin{example}
The TSL(T) formula $\phi \triangleq 0 \leq x \land x<5 \rightarrow \globally(0 \leq x \land x<5 \land ([x \leftarrow x-1] \lor [x \leftarrow x+1]))$ is encoded as the LTL formula 
\begin{multline*}
\phi_\bool \triangleq  \globally (p_{[x \leftarrow x-1]} \wedge \neg \ p_{[x \leftarrow x+1]} \vee p_{[x \leftarrow x+1]} \wedge \neg p_{[x \leftarrow x-1]}) \land\\
(p_{0 \leq x} \land p_{x<5} \rightarrow \globally(p_{0 \leq x} \land p_{x<5} \land (p_{[x \leftarrow x-1]} \lor p_{[x \leftarrow x+1]}))),
\end{multline*}
where $p_{0 \leq x}$ and $p_{x<5}$ are input variables and $p_{[x \leftarrow x-1]}$ and $p_{[x \leftarrow x+1]}$ are output variables.
\end{example}

\subsection{Boolean Mealy and Moore Machines}

We will now define Boolean  Mealy and Moore machines.
These are the abstract versions of theory machines defined above that can be checked against  propositionally encoded TSL(T) specifications.
They allow us to link Boolean systems realizing a PTSL formula to theory systems.
We also study when a Boolean system is consistent with the theory it abstracts over.

The Boolean abstraction is based on a finite set of predicates $P \subset \bexpr$.

A \emph{Boolean Mealy machine} is a tuple $(Q, P, q_0, \delta_\bool, \mu_\bool)$, where
$Q$ is a set of states, $P$ is a set of predicates, $q_0 \in Q$ is the initial state, $\delta_\bool \in Q \times 2^P \to Q$ is the transition function, and $\mu_\bool \in Q \times 2^P \to \updfs$ is the update selection function.

A trace $\sigma_\bool$ of a Boolean Mealy machine induced by a sequence $\bar{\varphi} \in (2^P)^\omega$ is an infinite alternating sequence of states and updates ${(Q\times \updfs)}^\omega$, $(q_0^\bool, \updf_1^\bool, q_1^\bool, \updf_2^\bool, \ldots)$ where $q_{i+1} = \delta_\bool(q_i, \varphi_i)$ and $\updf_{i+1} = \mu_\bool(q_i,\varphi_i)$.

A Boolean Mealy machine $M_\bool$ is theory consistent with respect to theory $\theory$ iff there exists a theory machine $M_\theory$ such that all $v \models p_0$ are initial valuations and every run of $M_\theory$ is contained in $M_\bool$. More precisely this means that $Q$, $q_0$, and $U$ are the same for both machines. Additionally, $\delta_\bool$ must be compatible with $\delta$ and $\mu_\bool$ with $\mu$, i.e., $\forall q, \varphi, \rvar, \ivar \ldotp \rvar\|\ivar \models \varphi \Rightarrow (\delta(q,\rvar,\ivar)= \delta_\bool(q,\varphi) \land \mu(q,\rvar,\ivar)= \mu_\bool(q,\varphi))$.

A Boolean Moore machine $M_{\bool} = (Q, P, q_0, \delta, o)$ is an abstract version of a theory Moore machine $M_{\theory}$.
$Q$ is a set of states, $P$ is a set of predicates, $q_0 \in Q$ is the initial state,  $\delta \in Q \times \updfs \to Q$ is the transition function, and $o \in Q \to 2^P$ is the output function.
The set $\updfs$ can be encoded as a set of Boolean variables, with one variable for every update in $U$ the input then becomes a valuation of these Boolean variables which has to conform to an update function.
A run $\sigma_\bool$ of a Boolean Moore machine induced by a sequence $\bar{\updf} \in \updfs^\omega$ is an infinite sequence of states ${Q}^\omega$, $(q_0, q_1, \ldots)$ where $q_{i+1} = \delta(q_i, \updf_i)$.

A Boolean Moore machine $M_\bool$ is theory consistent with respect to theory $\theory$ iff there exists a theory machine $M_\theory$ such that $\rvar_0 \models o(q_0)$ and every run of $M_\theory$ is contained in $M_\bool$. More precisely this means that $Q$, $q_0$, $U$, and $\delta$ are the same for both machines and a run $\sigma_\theory \subset \sigma_\bool$ iff $\forall i \ldotp \rvar_i \models o(q_i)$.
In this case we write $M_\theory \sim M_\bool$.


\subsection{Theory Consistency Analysis}
When an environment strategy is deemed inconsistent the specification is extended in one of three ways: a new single state assumption, a new transition assumption, or a new predicate.
The counter strategy analysis is performed in two stages. The first checks for consistency of outputs in a single state while the second one checks consistency of transitions.



        

\begin{algorithm}[htb]
\SetKwProg{Def}{def}{:}{}
\Def{\isconsistent{$m$}}{
 \KwData{Boolean Moore machine $m = (Q, P, U, q_0, \delta, o)$}
 \KwResult{\consistent\ or (possibly) \inconsistent\ with additional assumptions.}
 \ForEach{$q \in \operatorname{reachable}(Q)$ \tcp*{Case 1}}{
    \If{$\neg \sat{o(q)}$}{
        \textbf{yield} \inconsistent, $\globally \neg o(q)$)\;
    }
 }

 \ForEach{$(q_i,\updf,q_j) \in \operatorname{reachable}(\delta)$ \tcp*{Case 2}}{
    \If{$\neg \sat{ o(q_i) \land \updf \land  o(q_j)' }$}{
        \textbf{yield} \inconsistent, $\globally ( o(q_i) \land \updf \rightarrow \lnext \neg o(q_j))$\;
    }
 }

 \ForEach{$(q_i,\updf,q_j) \in \operatorname{reachable}(\delta)$ \tcp*{Case 3}}{
    \If{$\sat{ \forall i'_0, \ldots, i'_m \ldotp o(q_i) \land  \updf  \land \neg  o(q_j)' }$}{
        $wp$ := weakest precondition ($ \updf $, $ o(q_j)'$)\;
        $\rho$ := \quantelim{$\exists i'_0, \ldots, i'_m \ldotp wp \land o(q_i)$}\;
        \textbf{yield} \inconsistent, $\globally(\neg \rho \land o(q_i) \land \updf \rightarrow \lnext \neg o(q_j))$\;
    }
 }
 \textbf{return} \consistent\;
 }
 \caption{Consistent with theory and using inputs.\label{alg:consistency:inputs}}
\end{algorithm}

\paragraph{State Consistency}
To check state consistency we look at every (reachable) state in the counter strategy and use an SMT solver to check if the output assignment is consistent with the theory. For example the two variables $p_{x\geq5}$ and $p_{x<0}$ cannot be true in the same state.
If such a problem is found we generate a new assumption that rules out this assignment in every state. In the previous example this would generate the assumption $\globally(\neg p_{x\geq5} \vee \neg p_{x<0})$. This process is similar to the interaction between a SAT solver and a theory solver in an SMT solver with lazy encoding.

\paragraph{Transition Consistency} Once all states produce consistent outputs and there still exists a counter strategy, we turn towards transitions. As of now there are no assumptions that link the state before an action was performed to the state afterwards. This step creates these assumptions where necessary and also finds new predicates if the existing ones are not sufficient.

We again use an SMT solver to perform this analysis. Let's look at the transition $\{p_{x\geq5}\} [x \leftarrow x+1] \{\neg p_{x\geq5}\}$. To check it the following SMT problem is generated $x\geq5 \land x'=x+1 \land \neg(x'\geq5)$, this is unsatisfiable and we can generate an assumption to eliminate it $\globally(p_{x\geq5} \land [x \leftarrow x+1] \rightarrow \lnext p_{x\geq5})$. 

Another case is that a transition is possible for some, but not all of values. For instance, the triple $\{p_{x<0}\} [x \leftarrow x+1] \{p_{x \geq 0}\}$ does not hold for all values of $x$. This shows that our current abstraction might not be precise enough to correctly describe this transition. In that case we calculate the weakest precondition of the post state given the updates of the transition. This gives us the new predicate $x\geq-1$. In case there are input variables we give the environment the benefit of doubt when checking if a transition is inconsistent. A transition is inconsistent if for all possible inputs at the next time step we do not satisfy the target state predicate. After computing the weakest precondition we use existential quantifier elimination to remove the future inputs from our predicate. In that case we are looking for states where transition is valid, which changes the quantifiers from forall to exists.


\begin{example}
\Cref{alg:consistency:inputs} checks consistency on a local level, the environment strategy can still be globally consistent if case 3 reports \inconsistent.
The following Boolean Moore machine has two transitions (orange) that are reported as \inconsistent by \Cref{alg:consistency:inputs} even though the machine is globally consistent.

\includegraphics[width=0.6\textwidth]{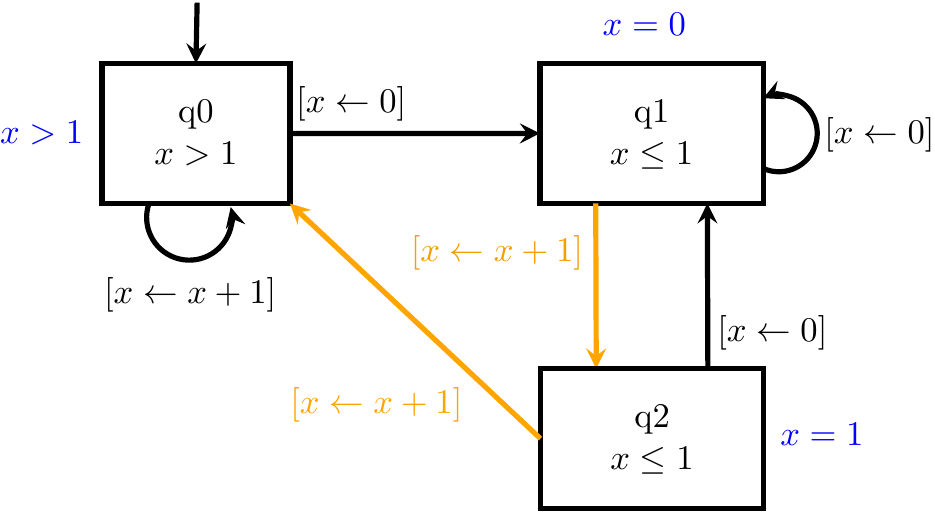}

The transition $(q1, [x \leftarrow x+1], q2)$ would be invalid for $x=1$ in $q1$, but for every execution $x=0$ in $q1$ and the problem does not appear. A similar situation occurs for the transition $(q2, [x \leftarrow x+1], q0)$, where $x$ will always be $1$ and the transition is only invalid for $x<1$.
The blue annotations show the possible values of $x$ in every state, demonstrating that all transitions are actually consistent.

The fact that we are over eager when reporting inconsistencies means that it will be harder to show unrealizability, but it does not affect our ability to find realizing systems.
\end{example}

\begin{lemma}
If \cref{alg:consistency:inputs} reports consistent for a Boolean Moore machine $M_\bool$ there exists a theory Moore machine $M_\theory$ such that $M_\theory \sim M_\bool$.
\label{lem:consistent}
\end{lemma}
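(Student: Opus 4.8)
The goal is to construct a theory Moore machine $M_\theory$ with $M_\theory \sim M_\bool$, i.e., sharing the state set $Q$, initial state $q_0$, update set $U$, and transition function $\delta$, and equipped with an output function $\iota: Q\times\rvars\to\ivars$ and an initial valuation $\rvar_0$, such that (i) $\rvar_0 \models o(q_0)$ and (ii) every run of $M_\theory$ respects the predicate-labelling, i.e.\ $\rvar_i \models o(q_i)$ for all $i$. The natural approach is to take the states, $q_0$, $U$, and $\delta$ verbatim from $M_\bool$ and show that consistency of \Cref{alg:consistency:inputs} supplies exactly the invariant needed to keep the register valuations inside the cylinders described by the $o(q)$'s along every reachable run.

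The plan is to proceed in three steps mirroring the three cases of the algorithm. First, pick $\rvar_0$: since Case 1 did not fire on $q_0$, $\sat{o(q_0)}$ holds, so choose any $\rvar_0 \models o(q_0)$; this gives (i). Second, establish the key invariant by induction along a run: assuming $\rvar_i \models o(q_i)$, we must define $\iota(q_i,\rvar_i)$ and show that for $\updf_i$ the update actually taken, $q_{i+1}=\delta(q_i,\updf_i)$ and $\rvar_{i+1}=\updf_i[\rvar_i,\iota(q_i,\rvar_i)]$ satisfy $\rvar_{i+1}\models o(q_{i+1})$. Here is where Cases 2 and 3 enter. Case 2 not firing tells us $\sat{o(q_i)\land \updf_i \land o(q_{i+1})'}$, so \emph{some} $(\rvar,\ivar,\rvar')$ witnesses the triple; Case 3 not firing tells us that, whenever the triple is \emph{not} universally valid, the precondition $\rho$ (obtained by quantifier elimination of the weakest precondition intersected with $o(q_i)$, projecting away the next-step inputs) is strong enough that on states satisfying $\rho\land o(q_i)$ there is a choice of next inputs making the target hold — and conversely the added assumption $\globally(\neg\rho\land o(q_i)\land\updf\to\lnext\neg o(q_j))$ would have been violated. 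The crux is to argue that for the \emph{particular} $\rvar_i$ reached, $\rvar_i\models\rho$, so a suitable next-step input exists; this is what lets us \emph{define} $\iota$. Third, collect these input choices into a well-defined output function $\iota: Q\times\rvars\to\ivars$ (choosing arbitrary values outside the reachable cylinders) and conclude that the constructed $M_\theory$ has all runs contained in $M_\bool$, hence $M_\theory\sim M_\bool$.

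The main obstacle, and the step deserving the most care, is the \emph{definability of $\iota$} in Case 3. The algorithm computes $\rho$ as a global predicate over $q_i$, but a run only visits a subset of the valuations in $o(q_i)$; the locally-inconsistent-but-globally-consistent example in the excerpt shows precisely that $\rvar_i\models\rho$ need not hold for every $\rvar_i\models o(q_i)$, only for those actually reachable. So the induction hypothesis must be strengthened: rather than merely $\rvar_i\models o(q_i)$, one should carry along "$\rvar_i$ lies in the set of register valuations actually reachable at $q_i$ under \emph{some} input history", and show this set is contained in $\rho$ at every outgoing transition whose target constraint is not universally implied — otherwise the assumption generated in Case 3 (already added to the specification in a previous CEGAR round, or provably sound to add) would exclude that reachable configuration, contradicting reachability. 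A clean way to phrase this is to define $\iota$ only on reachable configurations, prove the invariant by induction simultaneously with reachability, and note that the weakest-precondition/quantifier-elimination construction guarantees: if $\rvar_i$ is reachable at $q_i$ and $\updf_i$ is enabled, then $\rvar_i\models\rho$, hence $\exists\,\ivar'$ with $\updf_i[\rvar_i,\ivar']\models o(q_{i+1})$, and we set $\iota(q_i,\rvar_i)$ to witness this (picking the update-step input consistently; for the Mealy/Moore interface one also checks that the update $\updf_i$ is the one $\delta$ prescribes, which is immediate since $\delta$ is shared). The remaining verification — that $\rvar_0\models o(q_0)$ and that the resulting run satisfies $\rvar_i\models o(q_i)$ for all $i$, giving $\sigma_\theory\subset\sigma_\bool$ — is then a routine unwinding of the definitions of run and of $M_\theory\sim M_\bool$.
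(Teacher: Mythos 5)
Your construction is the same one the paper uses: keep $Q$, $q_0$, $U$, $\delta$ from $M_\bool$, pick $\rvar_0\models o(q_0)$ because Case~1 did not fire on $q_0$, and push a satisfying register valuation through each transition by induction, choosing inputs via $\iota$ along the way. That skeleton is correct and is essentially the paper's (much terser) argument spelled out.

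The problem is the step you single out as the crux. When the algorithm reports \consistent, Case~3 did not fire on any reachable transition, i.e.\ $\forall i'_0,\ldots,i'_m\ldotp o(q_i)\land\updf\land\neg o(q_j)'$ is \emph{unsatisfiable}. Negating this, \emph{every} $\rvar\models o(q_i)$ admits a choice of next inputs under which $\updf$ lands in $o(q_j)$; no predicate $\rho$ is ever computed in this branch, and no restriction to reachable valuations is needed. The ``locally inconsistent but globally consistent'' example you invoke is one where the algorithm reports \inconsistent\ --- it witnesses the incompleteness of the check (which matters for unrealizability), not a difficulty in this lemma, whose hypothesis excludes exactly that situation. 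Your proposed repair is also unsound as stated: arguing that reachable valuations satisfy $\rho$ because otherwise ``the assumption generated in Case~3, already added in a previous CEGAR round, would be violated'' imports the synthesis loop into a lemma that concerns a single invocation of \isconsistent{$M_\bool$} on a fixed machine, and is circular besides. Dropping that detour, your induction goes through directly. (If you want to be more careful than the paper actually is, the point that genuinely deserves attention is the Moore timing: the input $\iota(q_{i+1},\rvar_{i+1})$ is chosen once the target state and its register valuation are known, which is why a per-transition existential over the \emph{next} inputs suffices to define $\iota$ consistently.)
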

\begin{proof}
Assuming the algorithm returns consistent.
Every output function is satisfiable, this includes the initial state which contains an initial value of the theory machine.
For every transition $(\varphi_i, \updf, \varphi_j)$ where there exists a model of $\varphi_i$ all of the models map to a model of $\varphi_j$, by the transition property checked by the algorithm.
Therefore by induction all paths starting in the initial state and only using transitions from $\delta$ contain a path of a theory machine.
A theory machine $M_\theory$ is included in $M_\bool$.
\qed
\end{proof}

\begin{lemma}
All assumptions $\psi$ added by \cref{alg:consistency:inputs} are tautological with respect to the theory.
$$\forall M_\theory \ldotp M_\theory \models \psi$$

\label{lem:tautogicalAssumptions}
\end{lemma}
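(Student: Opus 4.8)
The claim is that every temporal assumption $\psi$ generated by \Cref{alg:consistency:inputs} holds on every theory Moore machine $M_\theory$. Since the algorithm produces assumptions in three distinct cases, the natural structure is a case analysis matching the three \textbf{yield} statements, and in each case I would argue that the generated $\globally(\cdot)$ formula is satisfied along every run $\sigma$ of an arbitrary $M_\theory$. Because each assumption has the form $\globally(\chi)$, it suffices to fix an arbitrary position $k$ in an arbitrary run and show that the body $\chi$ holds at that position; the valuations $\rvar_k, \ivar_k, \rvar_{k+1}$ involved are then just elements of the theory domain, and the obligation reduces to a statement about the theory that is exactly what the SMT query in that branch certified.

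First, for Case 1 the assumption is $\globally \neg o(q)$, and the guard established $\neg\sat{o(q)}$, i.e.\ $o(q)$ is unsatisfiable over $\theory$. Then at any position $k$ of any run, the valuation $\rvar_k$ (together with whatever input) cannot satisfy $o(q)$ — in particular it cannot if $q_k = q$, and if $q_k \neq q$ the state predicate labelling is different; either way $\rvar_k \not\models o(q)$, so $\neg o(q)$ holds at $k$. For Case 2 the guard gives $\neg\sat{o(q_i) \land \updf \land o(q_j)'}$, meaning there is no pair of valuations $(\rvar, \ivar)$ and successor $\rvar'$ with $\rvar \models o(q_i)$, $\rvar' = \updf[\rvar,\ivar]$, and $\rvar' \models o(q_j)$. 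So at any position $k$: if the antecedent $o(q_i) \land \updf$ holds (i.e.\ $\rvar_k \models o(q_i)$ and the transition taken is $\updf$, so that $\rvar_{k+1} = \updf[\rvar_k,\ivar_k]$), then $\rvar_{k+1} \not\models o(q_j)$, which is precisely $\lnext \neg o(q_j)$ at $k$; hence the implication holds. For Case 3 the assumption is $\globally(\neg\rho \land o(q_i) \land \updf \rightarrow \lnext \neg o(q_j))$ where $\rho = \quantelim{\exists i'_0,\dots,i'_m \ldotp wp \land o(q_i)}$ and $wp$ is the weakest precondition of $o(q_j)'$ under $\updf$. Here I would argue the contrapositive at position $k$: suppose $o(q_i) \land \updf$ holds and $\lnext o(q_j)$ holds, so $\rvar_k \models o(q_i)$ and $\rvar_{k+1} = \updf[\rvar_k,\ivar_k] \models o(q_j)$. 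By the defining property of the weakest precondition, $\rvar_k$ (with input $\ivar_k$) satisfies $wp$; hence $\rvar_k$ satisfies $\exists i'_0,\dots,i'_m \ldotp wp \land o(q_i)$, and by correctness of \quantelim, $\rvar_k \models \rho$, so $\neg\rho$ fails at $k$ and the antecedent of the implication is false. Thus the implication holds vacuously at $k$.

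The one genuinely delicate point — and the step I expect to be the main obstacle to write carefully — is the treatment of inputs and of the existential quantifier elimination in Case 3. The weakest precondition $wp$ and the predicate $o(q_j)'$ refer to the \emph{next}-step input variables $i'_0,\dots,i'_m$ (written with primes), so $wp$ is a formula over $R \cup I \cup I'$; one must be precise that the quantifier $\exists i'_0,\dots,i'_m$ eliminates exactly the primed input variables, leaving a predicate $\rho$ over $R$ alone (so it is a legitimate member of $\bexpr$ and a legitimate state predicate), and that the "benefit of the doubt" to the environment — the fact that inconsistency means failure \emph{for all} next inputs — is what makes the direction of the argument (existential, not universal) come out right. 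Concretely, in the contrapositive direction above, we have a specific witnessing next input $\ivar_{k+1}$ under which $o(q_j)'$ becomes true, and that single witness is enough to make the existential true; no universality over inputs is needed in this direction, which is exactly why the assumption is sound even though the guard only tested a "for some input" failure. I would also remark that in Cases 1 and 2 there are no primed inputs, so those reductions are immediate, and that the assumptions are all of the pure form $\globally(\cdot)$, so no subtlety about the temporal structure of the run is needed beyond pointwise evaluation. Finally, I would note that \quantelim is assumed to return a theory-equivalent quantifier-free formula, so replacing the quantified formula by $\rho$ preserves the set of satisfying $R$-valuations, which is all the argument uses.
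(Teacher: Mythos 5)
Your proof is correct and follows essentially the same route as the paper's: a case analysis on the three yielded assumption types, showing each $\globally(\cdot)$ body holds pointwise along any run of any theory machine. Your Case 3 argument (contrapositive via a witnessing next input for the existential, plus correctness of \texttt{quantelim}) is in fact spelled out more carefully than the paper's rather terse appeal to the weakest-precondition definition, but it is the same underlying idea.
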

\begin{proof}
The algorithm can produce three different types of assumptions $\psi_s, \psi_t, \psi_p$ corresponding to the three cases of the algorithm.

Let $\psi_s$ be $\globally \neg o(q)$ for an unsatisfiable $o(q)$.
$M_\theory$ must define a output valuation for every state, because $o(q)$ is empty no state in any $M_\theory$ can produce such an output.
Therefore all $M_\theory$ satisfy $\psi_s$.

Let $\psi_t$ be $\globally ( o(q_i) \wedge \updf \rightarrow \lnext \neg o(q_j))$ where $\neg \sat{ o(q_i) \wedge u \wedge  o(q_j)' }$ and $\sat{o(q_i)}$. None of the values satisfying $o(q_i)$ have a successor in $o(q_j)$ after performing $u$.
The added constraint is equivalent to $\neg( o(q_i) \wedge \updf \wedge  o(q_j)' ) \Leftrightarrow \neg o(q_i) \lor \neg \updf \lor  \neg o(q_j)' ) \Leftrightarrow (o(q_i) \land u) \rightarrow \neg o(q_j)' \Leftrightarrow o(q_i) \wedge \updf \rightarrow \lnext \neg o(q_j)$.
All transition in all $M_\theory$ satisfy this property at all points in time.

Let $\psi_p$ be $\globally( \neg p \wedge \updf \rightarrow \lnext \neg o(q_j))$ where $p$ is the weakest precondition of $o(q_j)$ under $u$ and $\sat{ o(q_i) \land u \land \neg o(q_j)}$.
By the definition of weakest precondition no value in $\neg p$ leads to $o(q_j)$ when performing $\updf$.
This also hold in the presence of inputs. The quantifier elimination procedure leads to the weakest precondition for unknown inputs at the next time step.
All transitions in all $M_\theory$ will lead from $\neg p$ to $\neg o(q_j)$ when performing $\updf$.

All added constraints are satisfied by all states and transitions in all $M_\theory$. The constraints only talk about individual states and transitions therefore also all traces in $M_\theory$ satisfy these constraints and $M_\theory \models \psi$.\qed
\end{proof}




\subsection{Generalizing Counterexamples}
\label{sec:opt:general:ce}
The counter examples generated by \cref{alg:consistency:inputs} only block the exact state or transition present in the counter strategy.
To achieve faster and better convergence it is necessary to generalize these counter examples.

Generalization of counter examples is done using an algorithm to find an unsatisfiable core, i.e., a small (not necessarily minimal) subset of clauses such that their conjunction is unsatisfiable. SMT solver such as Z3 \cite{demoura2008} contain an implementation of such a procedure.
We are using unsat cores to find smaller counter examples that do not depend on superficial information. 
Therefore, the counter examples also block situations where unrelated predicates or updates are different.

The counter examples generated in case 1 are straightforward to generalize. A given output predicate, a conjuction of literals, is unsatisfiable by calculating an unsat core we obtain a small set of these literal that is still unsatisfiable.
A counter example $\globally \neg o(q)$ is generalized to $\globally \neg unsatcore(o(q)$.

We can use a similar idea for the assumptions generated in case 2. 
The SMT encoding of a transition is unsatisfiable $\neg \sat{ o(q_i) \land \updf \land  o(q_j)' }$. An unsat core contains subsets of $o(q_i)$, $\updf$, and $o(q_j)'$ that are unsatisfiable, let them be $\overline{o(q_i)}$, $\overline{\updf}$, and $\overline{o(q_j)'}$ respectively. Using these we get the new assumption $\globally ( \overline{o(q_i)} \land \overline{\updf} \rightarrow \lnext \neg \overline{o(q_j)})$

Case 3 seems to be different as it does check for satisfiability of a quantified formula instead of unsatisfiabiliy of a quantifier free conjunction.
However, after finding the new predicates and performing quantifier elimination we get counter examples of a similar structure as in case 2.
The main difference is that the precondition is negated and the unsat core algorithm cannot remove clauses from it.
Because $\rho$ is the weakest precondition we know that the transition is not possible if any of its negated clauses are part of the precondition.
This allows us to split the assumptions into $\bigwedge_l \globally(\neg \rho_l \land \updf \rightarrow \lnext \neg o(q_j))$ where $\rho_l$ are the clauses of $\rho$ and generalize these using the unsat core procedure form case 2.



\section{Synthesis}
\label{sec:synthesis}

We adapt the technique used by Finkbeiner et al. \cite{finkbeiner2019a}, but use a more sophisticated analysis of counter strategies.
The procedure starts with a specification in TSL(T) that is translated to LTL by creating variables for every propositional expression and action.
Every predicate evaluation is treated as an input and only actions are assumed to be outputs.
This is given to a synthesis tool for propositional LTL: if it finds a solution this is also a solution for the TSL(T) synthesis problem,
otherwise a counter strategy is produced.
In case of a counter strategy it is analyzed to find an inconsistency with the used theory and either an extension of the specification is produced, the counter strategy is consistent, or the verifier is unable to decide consistency.
This procedure resembles a CEGAR \cite{clarke2000} loop and is depicted in \cref{fig:overview}.

\begin{algorithm}[H]
 \KwData{TSL(T) modulo theory specification: $\phi$}
 \KwResult{Satisfying Mealy machine or unrealizable or non-termination}
 \While{true}{
     $\phi_\bool$ := prop\_encode($\phi$)\;
     (r,m) := synth($\phi_\bool$)\;
     \eIf{r is \unrealizable}{
       c, $\psi$ := \isconsistent{m}\;
       \eIf{c is \inconsistent}{
        $\phi$ := $\psi \rightarrow \phi$\;
       }{
         return \unrealizable\;
       }
     }
     {
       return m\;
     }
 }
 \caption{Synthesis using abstraction refinement. \label{alg:synth}}
\end{algorithm}

For a initial specification $\phi$ the extension with new assumptions $\psi$
is given as $\phi_n \triangleq \globally \bigwedge_{k=1}^{n} \psi_k \rarrow \phi$ for the $n$-th refinement of $\phi$ with $\psi_k$ the assumptions added in refinement $k$.




\subsection{Illustrative Example}
\label{sec:synthesis:example}
Let us consider the following example: a counter $x$ has to be kept between a minimum and a maximum value.
At each time step the system can choose one of two actions: increase the counter by an input $i$, or decrement the counter.
The environment picks the value of the input variable from a given interval at each time step as well as the initial value of the state variable. We use the concrete values $0 \geq x < 10$ and $0 \geq i < 5$.
Formally this is defined by the TSL(LIA) specification 
$
\phi \triangleq (0 \leq x \land x < 10 \land \globally (0 \leq i \land i < 5)) \rarrow \globally (0 \leq x \land x < 10 \land ([x \leftarrow x-1] \lor [x \leftarrow x+i]))
$.

At first this is encoded as an LTL formula $\phi_\bool \triangleq (p_{0 \leq x} \land p_{x < 10} \land \globally (p_{0 \leq i} \land p_{i < 5})) \rarrow \globally (p_{0 \leq x} \land p{x < 10} \land (p_{[x \leftarrow x-1]} \lor p_{[x \leftarrow x+i]})$ which is given to a propositional synthesis tool. It informs us that $\phi_\bool$ is unrealizible and gives us a counter strategy as an explanation.

\includegraphics[width=0.4\textwidth]{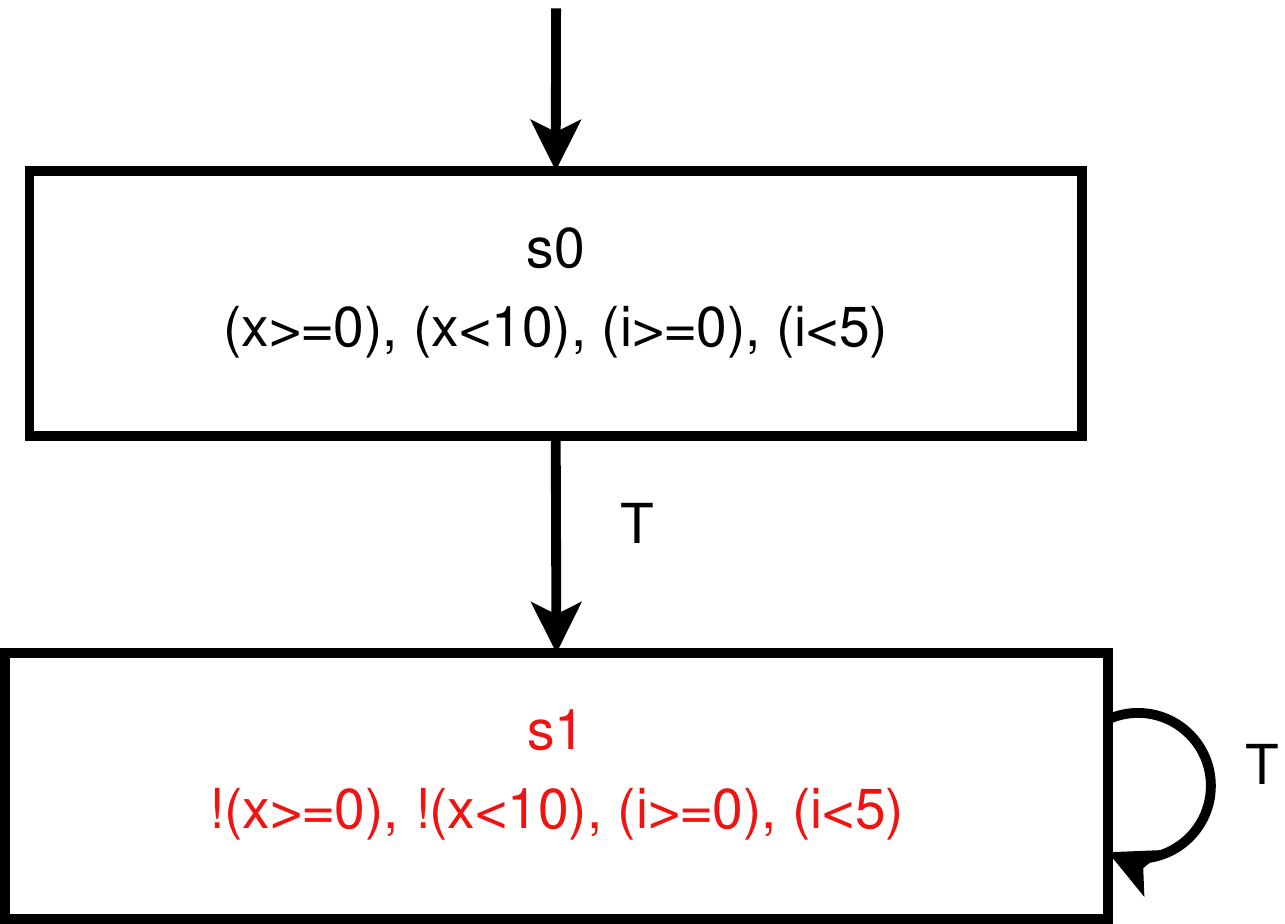}

This counter strategy is analyzed for theory inconsistencies using \cref{alg:consistency:inputs}.
The output of the state $s1$ is inconsistent, because $\neg (0 \leq x ) \land \neg (x < 10)$ is unsatisfiable.
We obtain the new assumption $\psi_1 \triangleq \globally (0 \leq x \lor x < 10)$, note that this is a more general assumption than just the negated state formula. \Cref{sec:opt:general:ce} goes into more detail on how to generalize assumptions. The new assumption is used to extend the specification.
The specification after $r$ iterations is $
\phi_r \triangleq \bigwedge_{k=1}^{r} \psi_k \rarrow ((0 \leq x \land x < 10 \land \globally (0 \leq i \land i < 5)) \rarrow \globally (0 \leq x \land x < 10 \land ([x \leftarrow x-1] \lor [x \leftarrow x+i])))
$ with $\psi_r$ the assumption added in iteration $r$.

Attempting to synthesis a system for $\phi_1$ results in the counter strategy
\includegraphics[width=0.4\textwidth]{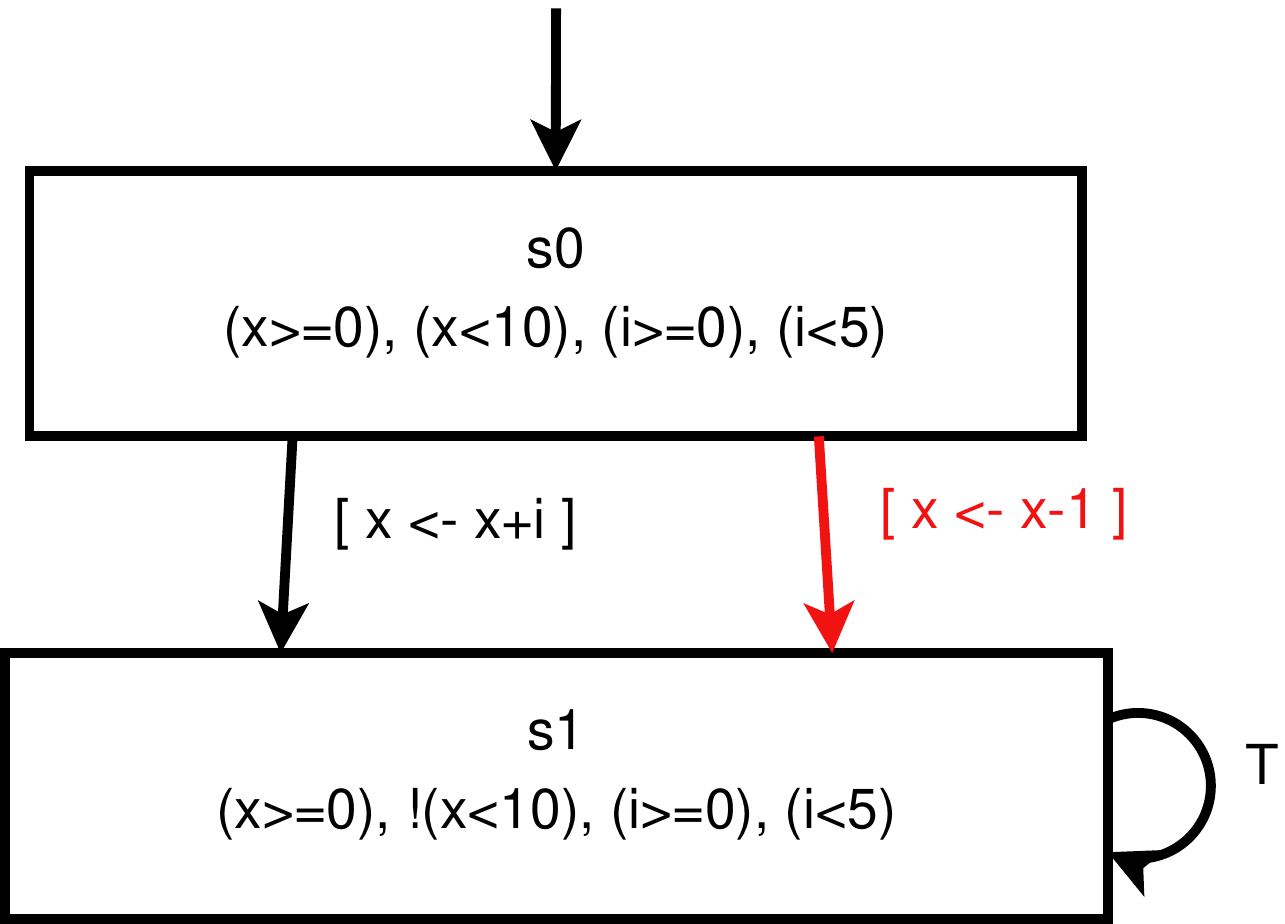}

This time all outputs are consistent, but the transition $(s0,[x \leftarrow x-1],s1)$ is inconsistent.
If $x<10$ and we compute the next value with $[x \leftarrow x-1]$ it cannot be that $x\geq10$ in the next state.
We obtain the assumption $\psi_2 \triangleq \globally (x < 10 \land [x \leftarrow x-1] \rightarrow \lnext x < 10)$.

Boolean synthesis for $\phi_2$ again results in a counter example, this time the transition $(s0,[x \leftarrow x+i],s1)$ is inconsistent.

\includegraphics[width=0.4\textwidth]{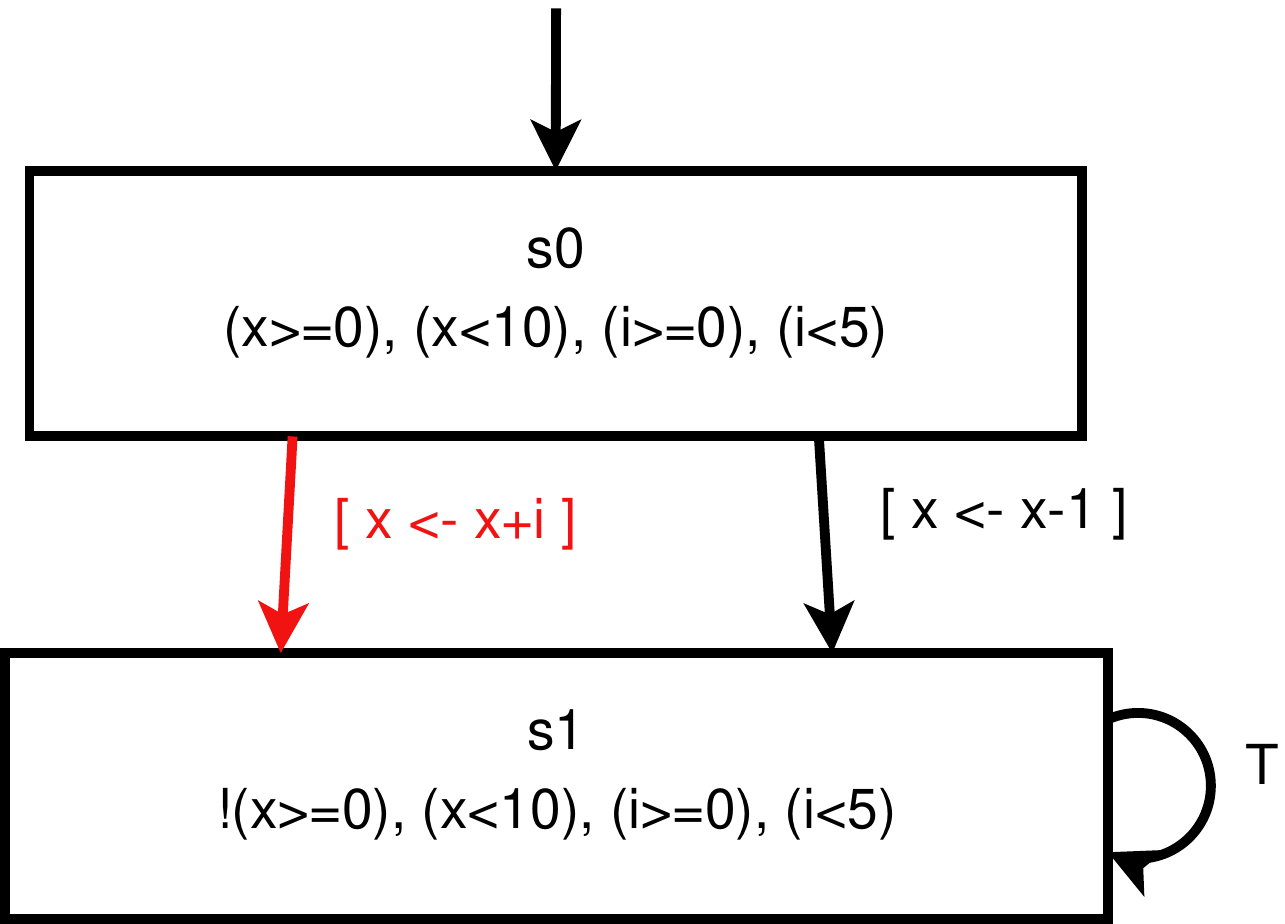}

The generated assumption is $\psi_3 \triangleq \globally (0 \leq x  \land [x \leftarrow x+i] \rightarrow \lnext 0 \leq x )$.

The Boolean synthesis problem for $\phi_3$ is again unrealizible with the counter strategy:

\includegraphics[width=0.9\textwidth]{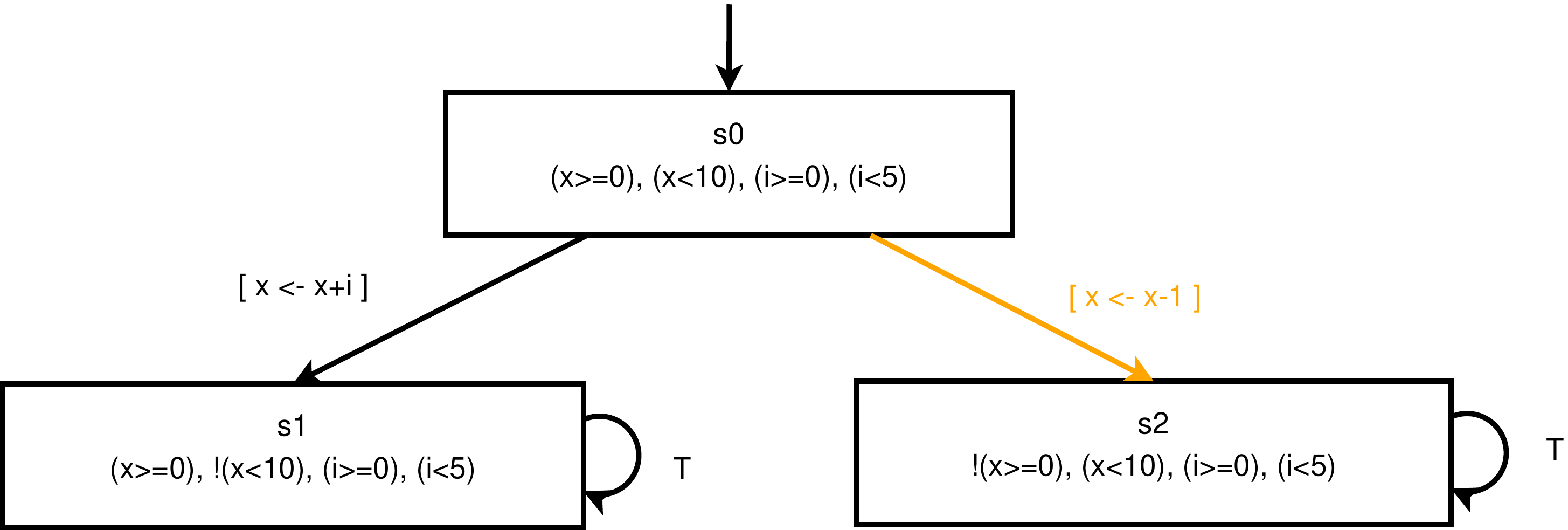}

All the output functions are consistent and none of the transitions are invalid for all valued.
We now have to look at the third category of inconsistencies, transitions that are only valid for some of the concrete states in its abstract origin state. The new predicate $x\geq1$ is learned from analyzing the transition $s0$ to $s2$.
It is part of the assumption $\psi'_4 \triangleq \globally (x \geq 1 \land [x \leftarrow x-1] \rightarrow \lnext 0 \leq x)$.
To reduce the required refinements in this example we immediately add a state consistency assumption and get $\psi_4 \triangleq \globally (x \geq 1 \lor x < 10) \land \psi'_4$.

Running the Boolean synthesis algorithm again results in a counter example:

\includegraphics[width=0.9\textwidth]{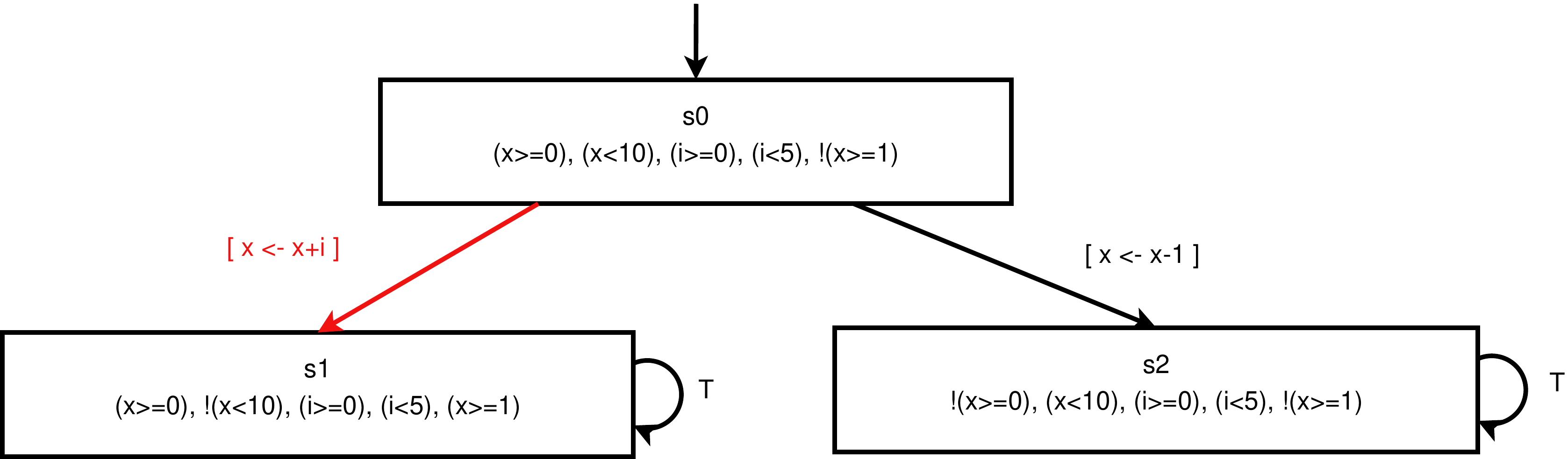}

Transition $(s0,[x \leftarrow x+i],s1)$ is inconsistent. The added assumption is 
$\psi_5 \triangleq \globally (0 \leq x \land x < 1 \land [x \leftarrow x+i] \rightarrow \lnext x < 10)$.

The Boolean synthesis is executed for the last time on $\phi_5$.
This time a Boolean system satisfying the specification is produced.

\includegraphics[width=0.6\textwidth]{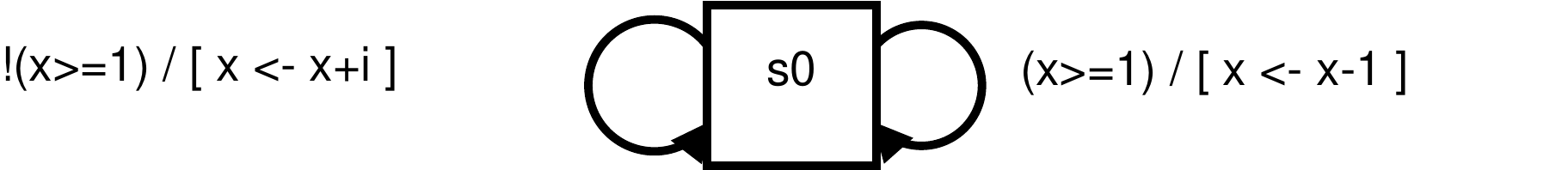}

It can be translated into a concrete imperative program.

\begin{lstlisting}
while(true):
    i := receive()
    if (x>=1):
         x := x - 1
    else:
         x := x + i
    send(x)
\end{lstlisting}

\subsection{Correctness}

\begin{theorem}
If \cref{alg:synth} returns unrealizible there is no $M_\theory \models \phi$.
\end{theorem}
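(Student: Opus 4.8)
The plan is to trace through \cref{alg:synth} and show that when it returns \unrealizable, the LTL synthesis step has produced a Boolean Moore counterstrategy for the current propositional specification $\phi_{\bool,r}$, and that \cref{alg:consistency:inputs} reported \consistent\ on it. First I would recall that LTL synthesis, when it reports \unrealizable for $\phi_{\bool,r}$, yields a Boolean Moore machine $M_\bool$ that witnesses unrealizability: for every Mealy (system) behavior, the interaction with $M_\bool$ produces a trace violating $\phi_{\bool,r}$. Since the consistency check returned \consistent, \cref{lem:consistent} gives a theory Moore machine $M_\theory$ with $M_\theory \sim M_\bool$.

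Next I would argue by contradiction: suppose some theory Mealy machine $M'_\theory$ realizes the original TSL(T) formula $\phi$, i.e.\ every run of $M'_\theory$ satisfies $\phi$. The key step is to confront $M'_\theory$ with the counterstrategy $M_\theory$: running them against each other produces a trace $\rho \in (\ivars \times \rvars)^\omega$ together with the induced sequence of predicate valuations $\bar\varphi$ and update functions $\bar\updf$. I would show this trace satisfies the original $\phi$ (since $M'_\theory$ realizes $\phi$) and also satisfies all refinement assumptions $\psi_1,\dots,\psi_r$ by \cref{lem:tautogicalAssumptions} (every $M_\theory$, in particular anything built from the actual run, satisfies each $\psi_k$), hence it satisfies $\phi_r = \bigwedge_{k=1}^r \psi_k \rarrow \phi$. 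Then, abstracting this theory trace through the predicates $P$ yields a Boolean trace of the interaction of an abstract Mealy machine with $M_\bool$; because $M_\theory \sim M_\bool$ the predicate valuations along the run are exactly those $M_\bool$ reads, so the Boolean trace is consistent with $M_\bool$ acting as environment. This Boolean trace satisfies $\phi_{\bool,r}$ because the theory trace satisfied $\phi_r$ and the propositional encoding is sound in that direction (the update-uniqueness conjunct of $\phi_\bool$ holds since a theory run selects exactly one update per variable per step). This contradicts the fact that $M_\bool$ is a counterstrategy against $\phi_{\bool,r}$.

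I would conclude that no realizing $M_\theory$ for $\phi$ exists. The main obstacle I anticipate is making the "confront $M'_\theory$ with $M_\theory$" composition precise: a Mealy theory machine reads predicate valuations and emits updates, while the Moore machine $M_\theory$ reads updates and emits inputs, and each also carries its own register valuation evolving under the chosen updates — I need to check the two register valuations (and the initial valuations, which must both be consistent with $o(q_0)$ on the Moore side while being whatever $M'_\theory$ allows) can be reconciled, or more carefully, that the composition is well-defined as a feedback loop producing a single consistent trace $\rho$. A second, lesser obstacle is carefully stating the soundness direction of the propositional encoding — that a theory trace satisfying $\phi_r$ maps to a Boolean trace satisfying $\phi_{\bool,r}$ — since the paper only informally asserts the reverse implication (Boolean realizability $\Rightarrow$ theory realizability); I would need the matching statement at the level of individual traces, which is routine by structural induction on the TSL(T) formula but should be spelled out.
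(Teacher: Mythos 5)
Your proposal is correct and follows essentially the same route as the paper: the paper's own (three-sentence) proof likewise concretizes the theory-consistent Boolean counterstrategy via \cref{lem:consistent} into a theory Moore machine witnessing unrealizability of $\phi$. You are in fact more careful than the paper, which silently identifies the refined specification $\phi_r$ with $\phi$ --- never invoking \cref{lem:tautogicalAssumptions} to discharge the added assumptions $\psi_1,\dots,\psi_r$ --- and does not spell out the Mealy/Moore composition or the trace-level soundness of the propositional encoding, all of which you correctly identify as the steps that need to be made precise.
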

\begin{proof}
If there exists a machine $M_\theory' \models \neg \phi$ there is no machine $M_\theory \models \phi$.
The propositional synthesis tool provides us with a machine $M_\bool \models \neg \phi$.
The consistency check results in consistent, so by \cref{lem:consistent} there exists a $M_\theory'\models \neg \phi$.
\qed
\end{proof}

\begin{lemma}
If $M_\bool$ satisfies a propositionally encoded formula $\phi_\bool$ than any of its conrcretizations $M_\theory \sim M_\bool$ satisfy the TSL(T) formula $\phi$.
$$M_\bool \models \phi^\bool \Rightarrow \forall M_\theory \sim M_\bool \ldotp M_\theory \models \phi$$
\label{lem:BoolRealImpliesTheoryReal}
\end{lemma}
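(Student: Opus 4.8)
The plan is to show that every trace of an arbitrary concretization is obtained by ``decoding'' a trace of $M_\bool$, and that decoding commutes with the propositional encoding of \cref{sec:abstr}. Fix $M_\theory \sim M_\bool$ and an arbitrary input sequence $\bar\ivar = \ivar_0,\ivar_1,\dots \in \ivars^\omega$, and let $(q_0,\rvar_0),(q_1,\rvar_1),\dots$ be the induced run of $M_\theory$ and $\rho = \rho_0,\rho_1,\dots$ with $\rho_j = (\rvar_j,\ivar_j)$ the corresponding trace. Put $\varphi_j \triangleq P_{(\rvar_j,\ivar_j)}$ and $\updf_j \triangleq \mu(q_j,\varphi_j)$. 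Because $M_\theory \sim M_\bool$ makes $\delta$ compatible with $\delta_\bool$ and $\mu$ with $\mu_\bool$, we get $q_{j+1} = \delta_\bool(q_j,\varphi_j)$ and $\updf_j = \mu_\bool(q_j,\varphi_j)$, so the Boolean trace of $M_\bool$ driven by $\varphi_0,\varphi_1,\dots$ passes through exactly the states $q_j$ and selects exactly the updates $\updf_j$. Let $w = w_0,w_1,\dots$ be the associated input/output word, i.e.\ $p_{ap}\in w_j$ iff $ap\in\varphi_j$, and $p_{[r\leftarrow e]}\in w_j$ iff $\updf_j(r)=e$. Since $w$ labels a trace of $M_\bool$ and $M_\bool\models\phi_\bool$, we have $w\models\phi_\bool$, hence $w\models\phi[ap/p_{ap},\dots,u/p_u,\dots]$ and $w$ satisfies the mutual-exclusion conjunct at every position.

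The second step is an induction on the structure of a subformula $\psi$ of $\phi$, proving that $\rho_j,\rho_{j+1},\dots\models\psi$ iff $w_j,w_{j+1},\dots\models\psi[ap/p_{ap},\dots,u/p_u,\dots]$ for every $j$. The Boolean connectives and the operators $\until$ and $\lnext$ are immediate, since the encoding does not touch them and $\rho$ and $w$ advance in lock-step. For an atomic predicate $ap$: $\rho_j,\dots\models ap$ iff $(\rvar_j,\ivar_j)\models ap$ iff $ap\in\varphi_j$ iff $p_{ap}\in w_j$ (using that every predicate of $\phi$ is in $P$). For an update $[r\leftarrow e]$: by the semantics, $\rho_j,\dots\models[r\leftarrow e]$ iff $\rvar_{j+1}(r)=e[R/\rvar_j,I/\ivar_j]$; by the run condition, $\rvar_{j+1}(r)=\updf_j(r)[R/\rvar_j,I/\ivar_j]$; and by the mutual-exclusion conjunct exactly one update term for $r$ is flagged at position $j$, namely the selected $\updf_j(r)$, so $p_{[r\leftarrow e]}\in w_j$ exactly when $e$ is that term. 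Instantiating with $\psi=\phi$ yields $\rho\models\phi$, and since $\bar\ivar$ was arbitrary, $M_\theory\models\phi$.

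I expect the update base case to be the only genuine obstacle: it is where the $\sim$-compatibility of $\mu$ with $\mu_\bool$ must be combined with the mutual-exclusion conjunct of $\phi_\bool$ so that the single term picked by $M_\bool$ at a position coincides with the one actually applied by $M_\theory$ there. With the value-based reading of $[r\leftarrow e]$ adopted above, this is also the point at which one would like syntactically distinct update terms of $\phi$ never to be forced to denote the same value; if that is not assumed, only the implication direction actually needed above remains, which still suffices. The remaining points --- that $w$ is really the labelling of an $M_\bool$-trace, and that $P$ contains every predicate occurring in $\phi$ --- are routine bookkeeping.
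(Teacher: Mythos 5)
Your proof is correct and is essentially the detailed version of the paper's argument, which compresses the same idea into three sentences: since the predicates are encoded as environment-controlled inputs, $M_\bool$ wins against every predicate valuation, in particular the single one the theory semantics induces along any concrete run, and your trace correspondence plus structural induction just makes that transfer explicit. The one subtlety you flag --- that the value-based semantics of $[r\leftarrow e]$ only matches the syntactic update selection of $M_\bool$ in one direction unless distinct update terms never coincide in value --- is a genuine gap in the paper's definitions that its own proof silently ignores, so flagging it is a point in your favour rather than a defect of your argument.
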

\begin{proof}
$M_\bool$ wins against all interpretations of the predicates. 
The theories semantics impose one interpretation of the theory symbols.

So $M_\theory$ wins against the one interpretation imposed by the theories semantics.\qed
\end{proof}

\begin{theorem}
If \cref{alg:synth} returns a system $M_\theory$ it holds that $M_\theory \models \phi$.
\end{theorem}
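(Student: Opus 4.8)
The plan is to chain together the lemmas already established so that correctness of a returned system follows from correctness of the underlying Boolean synthesis. First I would observe that \cref{alg:synth} only returns a system $M_\theory$ when the propositional synthesis tool reports \realizable\ and produces a Boolean Mealy machine $M_\bool$ with $M_\bool \models \phi_\bool$, where $\phi_\bool = \text{prop\_encode}(\phi_r)$ for the current refinement $\phi_r = \globally\bigwedge_{k=1}^{r}\psi_k \rarrow \phi$. The system actually returned is a concretization $M_\theory$ with $M_\theory \sim M_\bool$; such a concretization exists and is theory-consistent because the propositional encoding $\phi_\bool$ contains the mutual-exclusion conjunct $\globally(\bigwedge_r\bigvee_i(p_{[r\leftarrow e_i]}\land\bigwedge_{j\neq i}\neg p_{[r\leftarrow e_j]}))$, so any winning Mealy machine selects exactly one update per register at each step, which is precisely what a theory update selection function $\mu$ requires.

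Next I would apply \cref{lem:BoolRealImpliesTheoryReal} to conclude $M_\theory \models \phi_r$, i.e., $M_\theory \models \globally\bigwedge_{k=1}^{r}\psi_k \rarrow \phi$. It then remains to discharge the antecedent: I need $M_\theory \models \globally\bigwedge_{k=1}^{r}\psi_k$. This is exactly where \cref{lem:tautogicalAssumptions} comes in — every assumption $\psi_k$ added by \cref{alg:consistency:inputs} is tautological with respect to the theory, so $M_\theory \models \psi_k$ for all $k$, and hence $M_\theory \models \globally\bigwedge_{k=1}^{r}\psi_k$. Combining this with $M_\theory \models \globally\bigwedge_{k=1}^{r}\psi_k \rarrow \phi$ by modus ponens (at the level of the trace semantics: every trace of $M_\theory$ satisfies the assumptions, hence must satisfy $\phi$) gives $M_\theory \models \phi$, which is the claim.

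The main obstacle — really the only subtle point — is the bridge between the Boolean and theory worlds in the first step: one must be careful that $M_\bool \models \phi_\bool$ genuinely admits a concretization $M_\theory \sim M_\bool$, rather than merely asserting it. The definition of theory consistency for Boolean Mealy machines requires a compatible $\delta$ and $\mu$ whose values agree with $\delta_\bool,\mu_\bool$ on all valuations consistent with the predicate set, and one must check that the winning strategy's reliance on predicate inputs does not create an obligation that no theory interpretation can meet. Here the argument of \cref{lem:BoolRealImpliesTheoryReal} is what saves us: $M_\bool$ wins against \emph{all} interpretations of the predicate variables, in particular against the single interpretation fixed by the theory semantics, so reading off $\delta,\mu$ from that interpretation yields a well-defined theory machine. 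I would state this explicitly, then note the rest is the routine lemma-chaining above. A secondary bookkeeping point is that the refinement index $r$ at termination is finite, so $\bigwedge_{k=1}^{r}\psi_k$ is a genuine finite conjunction and $\phi_r$ is a well-formed TSL(T) formula to which \cref{lem:BoolRealImpliesTheoryReal} applies.
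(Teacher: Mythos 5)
Your proposal is correct and follows essentially the same route as the paper's own proof: extract $M_\bool \models \phi_r^\bool$ from the successful Boolean synthesis, apply \cref{lem:BoolRealImpliesTheoryReal} to get $M_\theory \models \phi_r$, and then discharge the antecedent $\bigwedge_{k=1}^{r}\psi_k$ via \cref{lem:tautogicalAssumptions}. Your additional remarks on why a concretization $M_\theory \sim M_\bool$ actually exists (mutual exclusion of updates, winning against all predicate interpretations) address a point the paper leaves implicit, but they do not change the structure of the argument.
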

\begin{proof}
The \cref{alg:synth} returns a system $M_\bool$ that for some refinement $r$ satisfies the Boolean specification $\phi_r^\bool$. 
By \cref{lem:BoolRealImpliesTheoryReal} any concretization $M_\theory \sim M_\bool$ satisfies $\phi_r$.
From \cref{lem:tautogicalAssumptions} it follows that $M_\theory$ satisfying $\phi_r \equiv (\bigwedge_k^r \psi_k) \rightarrow \phi$ satisfies $\phi$.\qed
\end{proof}






Even though our algorithm is not guaranteed to terminate it can proof unrealizabilty in certain cases.
$$
x=0 \rarrow \globally ( [x \leftarrow x+1] \land x < 3 )
$$
Here we can perform two refinement steps and learn the new predicates $x \geq 2$ and $x \geq 1$.
Using these the propositional synthesis tool is able to build a consistent environment strategy.
There are no conflicts which could be used to further refine the specification.
This shows that the specification is unrealizable.

\subsection{Limitations}
\label{sec:synth:limitations}
\begin{theorem}
The synthesis problem for TSL(T) modulo theories is undecidable.
\end{theorem}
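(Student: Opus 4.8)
The plan is to reduce a known undecidable problem to TSL(T) synthesis. The natural candidate is the halting problem for two-counter (Minsky) machines, since the theory of linear integer arithmetic lets us encode counters directly as integer state variables, and the available updates ($[c \leftarrow c+1]$, $[c \leftarrow c-1]$, $[c \leftarrow c]$) exactly match the operations of a counter machine. I would fix a two-counter machine $\mathcal{M}$ with a distinguished halting location, and build a TSL(LIA) formula $\phi_{\mathcal{M}}$ whose realizing systems are exactly faithful simulations of $\mathcal{M}$.

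First I would introduce state variables $c_1, c_2$ for the two counters together with a way of tracking the control location of $\mathcal{M}$ --- either an extra integer state variable $\ell$ (with $\phi_{\mathcal{M}}$ asserting $0 \le \ell < k$ for the $k$ locations) or, more cheaply, by letting the Mealy machine's own finite control carry the location, enforced through the temporal structure of $\phi_{\mathcal{M}}$. I would then write $\phi_{\mathcal{M}}$ as a conjunction of $\globally$-guarded implications, one per instruction: at a location with an increment on $c_1$ the system must play $[c_1 \leftarrow c_1+1]$, keep $c_2$ unchanged, and move to the successor location; at a zero-test location the guard $c_1 = 0$ (a legal atomic proposition, since it lies in $\bexpr$) selects between the two branches and forces the corresponding decrement or no-op. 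The initialization conjunct fixes $c_1 = c_2 = 0$ and the start location. Crucially, the machine has no inputs here, so the environment plays no meaningful role; a Theory Mealy machine realizing $\phi_{\mathcal{M}}$ is forced, step by step, to reproduce the unique run of $\mathcal{M}$.

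Next I would pin down the equivalence. If $\mathcal{M}$ halts, there is a finite prefix leading to the halting location, after which the simulation can be continued arbitrarily (e.g. looping forever at a sink), giving a finite-state Theory Mealy machine that realizes $\phi_{\mathcal{M}}$; so I should phrase $\phi_{\mathcal{M}}$ so that reaching the halting location is \emph{required} (e.g. $\eventually(\ell = \ell_{\mathrm{halt}})$) and all other behaviour is unconstrained past that point. Conversely, if $\phi_{\mathcal{M}}$ is realizable then by the semantics any run of the realizing machine is a correct simulation that eventually hits $\ell_{\mathrm{halt}}$, so $\mathcal{M}$ halts. This makes ``$\phi_{\mathcal{M}}$ is realizable'' decide the halting problem, yielding undecidability. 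One has to be slightly careful that the realizing Theory Mealy machine is finite-state: since the only obligation is to reach the halting location, which happens in finitely many steps on the unique run, a finite control suffices, so this is not an obstacle, but it should be checked explicitly.

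The main obstacle I expect is making the reduction watertight with the restricted syntax of TSL(T): updates must be simple term assignments, every atomic proposition must lie in $\bexpr$, and at each step exactly one update per state variable is active (the $\phi_\bool$ well-formedness conjunct). Encoding the zero-tests and the ``frame conditions'' (unchanged counters, correct location transitions) purely with $\globally$, $\lnext$, theory atoms, and updates --- without smuggling in anything the grammar forbids --- is the delicate part; in particular one must ensure the environment cannot exploit its (vacuous) input to evade the simulation, and that nondeterminism in the Mealy machine's reaction to a predicate valuation does not let it ``cheat''. I would handle this by making every instruction's behaviour deterministic given the counter-zero predicates, so the only realizing strategy is the honest one, and then the realizability of $\phi_{\mathcal{M}}$ is equivalent to halting of $\mathcal{M}$.
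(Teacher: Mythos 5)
Your proposal is correct and follows essentially the same route as the paper: a reduction from the halting problem for two-counter Minsky machines, with integer state variables for the counters and the control location, per-instruction guarded update constraints driven by zero-test predicates, and an $\eventually(\text{halt})$ obligation so that realizability coincides with halting. The paper's version uses an explicit integer instruction counter $ic$ as a third state variable (one of your two suggested options) and is otherwise the same construction.
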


\begin{proof}
A TSL(T) formula with three state variables and a theory supporting: unbounded integers, increment, decrement, and equality, can simulate a two counter machine. We consider a two counter machine \cite{minsky1967} with the three instructions increment register $inc(r)$, decrement register $dec(r)$, and jump if zero $jz(r,z)$.
A program for a two counter machine is a list of instructions, with out loss of generality we assume every instruction is prefixed with its position in the instruction list (the label l) the last element of the list is empty and has the label $h$. We also assume that all jump addresses point to an instruction in the list.
The instructions are encoded as:
\begin{align*}
\llbracket l:inc(R1) \rrbracket &\triangleq ic = l \land [ic \leftarrow ic + 1] \land [r1 \leftarrow r1 + 1] \land [r2 \leftarrow r2]\\
\llbracket l:dec(R1) \rrbracket &\triangleq ic = l \land [ic \leftarrow ic + 1] \land [r1 \leftarrow r1 - 1] \land [r2 \leftarrow r2]\\
\llbracket l:jz(R1,z) \rrbracket &\triangleq (ic = l \land r1=0 \land [ic \leftarrow z] \land [r1 \leftarrow r1] \land [r2 \leftarrow r2])\\ 
&\lor (ic = l \land r1\neq0 \land [ic \leftarrow ic +1] \land [r1 \leftarrow r1] \land [r2 \leftarrow r2])
\end{align*}
the instruction for $R2$ are the same except $r1$ and $r2$ are flipped.
Given this encoding we can build the formula
$
(ic = 0 \land r1 = 0 \land r2 = 0) \rightarrow ( \bigvee_{instr \in PROG} \llbracket instr \rrbracket \land \eventually ic = h )
$
for a program $PROG$.

This is realizable iff the program holds. 
The halting problem for two counter machines is undecidable.
Therefore the synthesis problem for TSL(T) mod theories is undecidable.
This also applies to formulas with only only one state variable.
In that case one can construct a theory over the domain $\mathbb{N}\times\mathbb{N}\times\mathbb{N}$ with operations that work on the individual fields.\qed
\end{proof}

Our algorithm cannot handle reachability properties where the number of required steps depends on the concrete value of a state variable and is unbounded. The specification 
$$0 \leq x \rarrow (\eventually (x<0) \land \globally ( [x \leftarrow x+1] \lor [x \leftarrow x-1] ))$$
with the state variable $x$ is an example of this happening.
The specification is obviously realized by a system always using the update $[x \leftarrow x-1]$.
However, we would add the new predicates $x\geq1, x\geq2, \ldots$ with out ever terminating.

The algorithm fails to prove unrealizability in certain cases.
$$
x=1 \rarrow (\eventually (x=0) \land \globally [x \leftarrow x+1])
$$
Here we would learn the predicates $x=-1, x=-2, \ldots$ with out terminating.
Note that knowing the predicate $x>1$ would allow us to prove unrealizability.
It can serve as an invariant of the loop in a bad lasso.


\section{Experimental Evaluation}
\label{sec:experiments}
We implemented our algorithm in Haskell.
Our implementation relies on several external tools: tsltools \cite{finkbeiner2019a} is used for parsing TSL and to perform the propositional encoding, strix \cite{meyer2018} is used for LTL synthesis, and Z3 \cite{demoura2008} is used as the SMT solver.
When performing counter example analysis using \cref{alg:consistency:inputs} we add all assumptions from the same case before we start the next iteration. Once a system has been found we rerun strix on the last refinement with the option to minimize the number of states. This allows us to obtain a more compact system.

\subsection{Illustrative example extended}
The first experiment is an extension of the illustrative example from \cref{sec:synthesis:example}.
The system is no longer allowed to change between the two updates at every step.
Instead after changing the update it has to use the new update for the next $c$ steps.
We also varied the size of the intervals for $x$ and $i$ demonstrating that our algorithm is independent of the size of the concrete state space. The results table lists the used parameters ($c$, $x_{max}$, $i_{max}$), the number of refinments, the number of state in the minimized system, the number of learned predicates during the whole execution and the total runtime in seconds.

\begin{align*}
&0 \leq x \land x \leq x_{max} \land \globally (i \geq 0 \land i \leq i_{max}) \rarrow \globally (\\
&0 \leq x \land x \leq x_{max} \land ([x \leftarrow x-i] \lor [x \leftarrow x+i]) \land\\
&(([x \leftarrow x-i] \land \lnext [x \leftarrow x+i]) \rightarrow \globally_{[1,c]} [x \leftarrow x+i]) \land\\
&(([x \leftarrow x+i] \land \lnext [x \leftarrow x-i]) \rightarrow \globally_{[1,c]} [x \leftarrow x-i]))
\end{align*}

\begin{center}
\begin{tabular}{ |c|c|c|c|c|c|c| } 
\hline
$c$ & $x_{max}$ & $i_{max}$ & \# refinements & \# states & \# learned predicates & time [s]\\
\hline
1 & 100 & 5 & 4 & 1 & 2 & 1.0 \\
2 & 100 & 5 & 5 & 2 & 2 & 1.3 \\ 
2 & 1 000 & 5 & 5 & 2 & 2 & 1.3 \\ 
2 & 100 000 & 50 & 5 & 2 & 2 & 1.3 \\ 
3 & 100 & 5 & 9 & 2 & 4 & 2.9 \\ 
3 & 1 000 & 5 & 9 & 2 & 4 & 2.9 \\ 
3 & 100 000 & 50 & 10 & 2 & 4 & 3.1 \\ 
3 & 1 000 000 & 5 000 & 9 & 2 & 4 & 3.0 \\ 
\hline
\end{tabular}
\end{center}

\subsection{Elevator}
A classic example for reactive synthesis is a controller for an elevator.
We include two variants: the first naive one has no inputs and needs to visit all floors infinitely often, the second version contains an input signal where a user can tell the elevator where it should go.

\paragraph{Naive Elevator} The single state variable $floor$ represents the current position of the elevator.
It can start anywhere between the first floor and the maximum floor and is not allowed to leave this interval.
The controller has three options: move elevator up or down or stay at the same position.
Every floor has to be visited infinitely often. This can be expressed as the TSL(LIA) formula

\begin{align*}
&(floor \geq 1 \land floor \leq max) \rarrow \globally (\\
&([floor \leftarrow floor] \lor [floor \leftarrow floor - 1] \lor [floor \leftarrow floor + 1]) \land \\
&floor \geq 1 \land floor \leq max \land \bigwedge_k^{max} \eventually floor = k)
\end{align*}
We varied the number of floors of the building to show how our algorithm scales with more complex specifications.
No new predicates are learned as a sufficient number of predicates is already included in the specification (equality tests for every floor are part of the liveness properties).
\begin{center}
\begin{tabular}{ |c|c|c|c| } 
\hline
\# floors & \# refinements & \# states & time [s]\\
\hline
3  & 13 & 2 & 3.1 \\
4  & 11 & 3 & 3.7 \\ 
5  & 15 & 4 & 8.2 \\ 
8  & 21 & 4 & 45 \\ 
10 & 24 & 4 & 185 \\ 
\hline
\end{tabular}
\end{center}

\paragraph{Elevator with an input signal}
The state consists of two variables: the current $floor$, and the $target$ we want to reach.
There is one input variable $signal$ used by the environment to request the next target.
We assume the floors are numbered $1$ to $max$ and the number $0$ denotes an undefined value.
As in the naive elevator example the elevator can move up or down or stay at the same level. 
It must stay between the minimum and the maximum floor.
If there is currently no target the $signal$ can be any valid floor and this will be stored as the new $target$.
Whenever a new target is selected the elevator has to eventually reach it, at that point $target$ is reset to zero.

\begin{align*}
&(floor \geq 1 \land floor \leq max \land target = 0 \land\\
&\globally (signal \geq 0 \land signal \leq max \land (target \neq 0 \rightarrow signal = 0))) \rarrow \globally (\\
&([floor \leftarrow floor] \lor [floor \leftarrow floor - 1] \lor [floor \leftarrow floor + 1]) \land \\
&((signal \neq 0 \land floor \neq target) \rightarrow [target \leftarrow signal]) \land\\
&((signal = 0 \land floor \neq target) \rightarrow [target \leftarrow target]) \land\\
&(floor = target \rightarrow [target \leftarrow 0]) \land\\
&floor \geq 1 \land floor \leq max \land \bigwedge_k^{max} (target=k \rightarrow \eventually floor = k))
\end{align*}

\begin{center}
\begin{tabular}{ |c|c|c|c| } 
\hline
\# floors & \# refinements & \# states & time [min]\\
\hline
3  & 5 & 1 & 0:35 \\
4  & 5 & 1 & 3:37 \\ 
5  & 6 & 1 & 23:44 \\ 
\hline
\end{tabular}
\end{center}

The specification above explicitly enumerates all floors and contains a liveness guarantee for each of them.
Instead one could only require that $target$ is reset infinitely often.
This makes the problem significantly harder: 59min and 14 refinements for only three floors.

\subsection{Sorting}
TSL(LIA) can be used to specify a sorting algorithm for a finite number of elements.
Every element is represented by its own variable, which are initialized by the environment to arbitrary integers.
The system is allowed to exchange two adjacent number or leave everything unchanged.
It has to guarantee that eventually the numbers are sorted and stay that way.
The TSL(T) formula for three variables is:

\begin{align*}
&\eventually \globally (a \geq b \land b \geq c) \land \globally(\\
&([a \leftarrow b] \land [b \leftarrow a] \land [c \leftarrow c]) \lor\\
&([a \leftarrow a] \land [b \leftarrow c] \land [c \leftarrow b]) \lor \\
&([a \leftarrow a] \land [b \leftarrow b] \land [c \leftarrow c]))
\end{align*}

This problem turns out to be very hard, while a system for three numbers can be build in less than 4s, four numbers took almost 20min and five numbers timed out after 13h.

\begin{center}
\begin{tabular}{ |c|c|c|c| } 
\hline
\# variables & \# refinements & \# states & time\\
\hline
3  & 6 & 2 & 3.7 s \\
4  & 12 & 3 & 18:46 min \\ 
5  & $>$10 & - & timeout ($>$13 h) \\ 
\hline
\end{tabular}
\end{center}

\subsection{Water Tanks}
The previous examples all used linear integer arithmetic.
We can also use other SMT theories like linear real arithmetic (LRA).
Using reals allows us to model linear cyber physical systems.

Belta et al. \cite{belta2017} chapter 9 describes a system of two coupled water tanks with linear dynamics; one water tank drains ($x2$) and the other one ($x1$) is refilled by the controller. The original input domain (refill tank x1) is $i \in [0,0.0005]$.
The input choice is discretized with two values ($0$ and $0.0003$) and represented as different updates.
We created two variants of the system.

The first one is a safety specification where the water level of both tanks has to be kept between $0.1$ and $0.7$.
\begin{lstlisting}[mathescape=true]
$(0.2 \leq x1 \land x1 < 0.7 \land 0.1 \leq x2 \land x2 < 0.7)$
$\rarrow \globally ($
$0.1 \leq x1  \land x1 < 0.7 \land 0.1 \leq x2 \land x2 < 0.7 \land$
$((x1 < 0.2 \land x2 < 0.2) \rightarrow$
  $(([x1 \leftarrow x1 + 0.0   *324.6753] \lor [x1 \leftarrow x1 + 0.0003*324.6753]) \land$
  $[x2 \leftarrow 0.9635*x2])) \land$
$((x1 \geq 0.2 \lor x2 \geq 0.2) \rightarrow$
  $(([x1 \leftarrow 0.8281*x1 + 0.1719*x2 + 0.0   *324.6753] \lor$
    $\!\![x1 \leftarrow 0.8281*x1 + 0.1719*x2 + 0.0003*324.6753]) \land$
  $[x2 \leftarrow 0.7916*x2 + 0.1719*x1]))$
$)$
\end{lstlisting}

Synthesis of this system took 31 seconds and 4 refinements. It results in a system with a single state.

A specification with two tanks and a liveness property is currently out of reach for our tool.
Instead we created a second specification with a single tank and a liveness property.
Whenever the water level falls below $0.1$ it has to eventually exceed $0.4$.





\begin{align*}
&0.0 \leq x \land x < 0.7 \rarrow \globally (\\
&([x \leftarrow 0.9635*x] \lor [x \leftarrow 0.9635*x + 0.1]) \land\\
&(x<0.1 \rightarrow \eventually (x>0.4)))
\end{align*}

A system realizing that specification can be synthesized using 18 refinements in 95 seconds, it consists of 2 states.

\section{Conclusion}
\label{sec:conclusion}
We presented a synthesis procedure for temporal stream logic modulo theories.
Our algorithm is based on a CEGAR \cite{clarke2000} loop and translation to propositional LTL synthesis.
The synthesis problem for TSL modulo theories in general is undecidable.
However, we can synthesize systems or prove unrealizability in many cases.
Huge state spaces can be handled by using a symbolic representation during synthesis.
Some specifications require a system with predicates that are not part of the original specification, in many cases we are able to automatically find these.


%
%
\bibliographystyle{splncs04}
\bibliography{mybibliography}
\end{document}